%\documentclass[A4paper,11pt,3p]{elsarticle}

%\documentclass[final,5p,times,twocolumn]{elsarticle}

%\pdfoutput=1

%% Use the option review to obtain double line spacing
\documentclass[preprint,11pt,3p]{elsarticle}

\usepackage[numbers]{natbib}
%% Use the options 1p,twocolumn; 3p; 3p,twocolumn; 5p; or 5p,twocolumn
%% for a journal layout:
%% \documentclass[final,1p,times]{elsarticle}
%% \documentclass[final,1p,times,twocolumn]{elsarticle}
%% \documentclass[final,3p,times]{elsarticle}
%% \documentclass[final,3p,times,twocolumn]{elsarticle}
%% \documentclass[final,5p,times]{elsarticle}
%% \documentclass[final,5p,times,twocolumn]{elsarticle}

%% if you use PostScript figures in your article
%% use the graphics package for simple commands
%% \usepackage{graphics}
%% or use the graphicx package for more complicated commands
%% \usepackage{graphicx}
%% or use the epsfig package if you prefer to use the old commands
%% \usepackage{epsfig}

%% The amssymb package provides various useful mathematical symbols
\usepackage{amssymb}
%% if you want to include Post Script figures
\usepackage{graphicx}

\usepackage{pgf,tikz}

\usetikzlibrary{snakes}
\tikzstyle{printersafe}=[snake=snake,segment amplitude=0 pt]

\usetikzlibrary{arrows}
\usetikzlibrary{shapes}
\usepackage{amsmath}
\usepackage{caption}
%% The amsthm package pro
%% The amsthm package provides extended theorem environments
%% \usepackage{amsthm}

%% The lineno packages adds line numbers. Start line numbering with
%% \begin{linenumbers}, end it with \end{linenumbers}. Or switch it on
%% for the whole article with \linenumbers after \end{frontmatter}.
%% \usepackage{lineno}

%% natbib.sty is loaded by default. However, natbib options can be
%% provided with \biboptions{...} command. Following options are
%% valid:

%%   round  -  round parentheses are used (default)
%%   square -  square brackets are used   [option]
%%   curly  -  curly braces are used      {option}
%%   angle  -  angle brackets are used    <option>
%%   semicolon  -  multiple citations separated by semi-colon
%%   colon  - same as semicolon, an earlier confusion
%%   comma  -  separated by comma
%%   numbers-  selects numerical citations
%%   super  -  numerical citations as superscripts
%%   sort   -  sorts multiple citations according to order in ref. list
%%   sort&compress   -  like sort, but also compresses numerical citations
%%   compress - compresses without sorting
%%
%% \biboptions{comma,round}

% \biboptions{}

\usepackage{todonotes}
\usepackage[linesnumbered,ruled]{algorithm2e}
\usepackage{pgf, tikz}
\usetikzlibrary{graphs, quotes, graphs.standard}
\usetikzlibrary{matrix}
\usepackage{tabularx}
\usepackage{arydshln,leftidx,mathtools}
\setlength\dashlinedash{2pt}
\setlength\dashlinegap{1.2pt}
\usepackage{setspace}

\usepackage{comment}
\usepackage{mathtools}

{\endarray\hspace*{-0.5\arraycolsep}\endBmatrix}
{\endarray\endbmatrix}

%\pdfoutput=1
%Used packages:
%graphicx
%hyperref
%natbib
%subfiles

\usepackage{subfiles}
\usepackage{enumerate}
\usepackage{bm}
\usepackage[T1]{fontenc}
\usepackage[utf8]{inputenc}
\usepackage[english]{babel}
\usepackage{float}

\usepackage{accents}

 %\usepackage{bickham}
%\pagestyle{empty}

%------------------

%\setlength{\topmargin}{0.0in}
%\setlength{\textheight}{10in}
%\setlength{\oddsidemargin}{0.0in}
%\setlength{\evensidemargin}{0.0in}
%\setlength{\textwidth}{6.5in}

%-------------------
%\newtheorem{proposition}[theorem]{Proposition}
%\newtheorem{lemma}[theorem]{Lemma}
%\newtheorem{corollary}[theorem]{Corollary}
%\newtheorem{conjecture}[theorem]{Conjecture}

%\theoremstyle{definition}
%\newtheorem{definition}[theorem]{Definition}
%\newtheorem*{example}{Example}

%------------------

\newtheorem{theorem}{Theorem}
\newtheorem{proposition}{Proposition}

\newtheorem{definition}{Definition}

\newtheorem{corollary}{Corollary}
\newtheorem{remark}{Remark}

\newenvironment{proof}{ {\bf Proof:}} %{$\Box$}

%\journal{Operations Research Letters}

\begin{document}
\onehalfspacing

\begin{frontmatter}

\title{Facets of the Total Matching Polytope for bipartite graphs}

\author{Luca Ferrarini}%\corref{cor1}\fnref{label3}}
\ead{l.ferrarini3@campus.unimib.it}

\address{Università di Pavia, Dipartimento di Matematica ``F. Casorati''}

%\address[label4]{Università di Pavia, Dipartimento di Matematica ``F. Casorati''}

\begin{abstract}
The {\it Total Matching Polytope} generalizes the Stable Set Polytope and the Matching Polytope.
In this paper, we give the perfect formulation for Trees and we derive two new families of valid inequalities, the {\it balanced biclique inequalities} which are always facet-defining and the {\it non-balanced lifted biclique inequalities} obtained by a lifting procedure, which are facet-defining for bipartite graphs.
Finally, we give a complete description for Complete Bipartite Graphs.

\end{abstract}

\begin{keyword}
%% keywords here, in the form: keyword \sep keyword
Integer Programming \sep Combinatorial Optimization  \sep Total Matching \sep Polyhedral Combinatorics 
%% MSC codes here, in the form: \MSC code \sep code
%% or \MSC[2008] code \sep code (2000 is the default)
\end{keyword}

\end{frontmatter}

\section{Introduction}
Let $G=(V,E)$ be a simple, loopless and undirect graph, and let $D = V \cup E$ be the set of its elements.
We say that the elements $a,b \in D$ are {\it adjacent} if $a$ and $b$ are adjacent vertices, or if they are incident edges, or if $a$ is an edge incident to a vertex $b$.
If two elements $a,b \in D$ are not adjacent, they are {\it independent}.
A \textit{stable set} is an independent set of vertices, instead a \textit{matching} is an independent set of edges.
A \textit{total matching} is a subset $T \subseteq D$ where the elements are pairwise independent.
A subset $C \subseteq V \cup E$ is a \textit{total cover} of $G$ if it covers
all the elements of $G$.
The Total Matching Problem asks for a total matching of maximum size.
This problem generalizes both the Stable Set Problem, where we look for a stable set of maximum size and the Matching Problem, where instead we look for a matching of maximum size.
In particular, a matching is called {\it perfect} if it covers all the vertices, that is, has size $\frac{1}{2}|V|$. 
We define $\nu_T(G) := \max \{ |T| : T \mbox{ is a total matching}\}$, $\nu(G):= \max \{ |M| : M \mbox{ is a matching}\}$ and $\alpha(G):= \max \{|S| : S \mbox{ is a stable set}\}$.
%%%%%%%%%%%%%%%%%%%%%%%%%%%%%%%%%
%
The first work on the Total Matching Problem appeared in \cite{NordHaus}, where the authors derive lower and upper bounds on the size of a maximum total matching.
In particular, the authors show that:
\begin{center}
    $\nu_{T}(G) \geq \max\{\alpha(G),\nu(G)\}$
\end{center}
In \cite{TotalMatching}, they find a relation between $\nu_{T}(G)$ and $\tau(G):=\min \{|C| : C \mbox{ is a total cover}\}$, indeed
they show that:
\begin{center}
    $\tau(G) \leq \nu_{T}(G)$
\end{center}
In \cite{Manlove}, D.F. Manlove provides a concise survey of the algorithmic complexities of the decision problems related to the previous parameters.
The author reports that $\nu_{T}(G)$ can be computed in polynomial time for Trees and it is NP-complete for bipartite, chordal and arbitrary graphs.
From a polyhedral point of view, many studies of packing polytopes associated to the Stable Set Problem and the Matching Problem have been proposed.
The Stable Set Polytope is the convex hull of the incidence vectors of all stable sets and the Matching Polytope is the convex hull of the incidence vectors of all matchings.
Many valid and facet-defining inequalities have been investigated for Stable Set Polytope, see \cite{Letchford,Padberg1973,Chvatal,Oriolo,quasi-line,Rossi2001,Ventura,Rebennack2011,Yuri}.
In particular, complete linear descriptions have been obtained for classes of graphs as line-graphs and quasi-line graphs, see \cite{quasi-line,Edmonds1965}.
The authors in \cite{Polyhedra} propose the first polyhedral approach for the Total Matching Problem deriving facet-defining inequalities for the polytope associated with it. 
The convex hull of the incidence vectors of total matchings is called the Total Matching Polytope and it is denoted by $P_{T}(G)$.
Given a total matching $T$, the corresponding characteristic vector is defined as follows:
\begin{equation*}
  \chi[T]=\left\{
  \begin{array}{@{}ll@{}}
    z_{a}=1 & \text{if}\ a \in T \subseteq D  \\
    z_{a}=0 & \text{otherwise}.
  \end{array}\right.
\end{equation*} 
where $z = (x,y) \in \{0,1\}^{|V|} \times \{0,1\}^{|E|}$, $x$ correspond to the vertex variables and $y$ to the edge variables.
Hence, The {\bf Total Matching Polytope} of a graph $G=(V,E)$ is defined as:
\[
P_{T}(G) := \mbox{conv}\{\chi[T] \subseteq \mathbb{R}^{|V|+|E|} \mid T \subseteq D \mbox{ is a total matching} \}.
\]
%
%The first work on the Total Matching Problem appeared in \cite{TotalMatching}, where the authors derive lower and upper bounds on the size of a maximum total matching.
%
Despite the strong connection with the Matching Problem, the Total Matching Problem is less studied in the operations research literature.
In particular, significant results are obtained only for structured graphs, as cycles, paths, full binary trees, hypercubes, and complete graphs, \cite{Leidner2012}. 
Motivated by the study of the Stable Set Polytope and the Matching Polytope for bipartite graphs, we expect to have a nice linear complete description of the Total Matching Polytope for bipartite graphs.
For this reason, in this paper we mainly focus on the facial structure of $P_{T}(G)$ for bipartite graphs.
We present two new families of facet-defining inequalities for the Total Matching Polytope and derive complete characterizations of Trees and Complete Bipartite Graphs.
%

%%%%%%%%%%%%%%%%%%%%%%%%%%%%%%%
\paragraph{Our contributions} The main results of this paper are:
\begin{itemize}
    
    \item Complete description of Trees and Complete Bipartite Graphs.
    
    \item New families of facet-defining inequalities, the {\it balanced biclique inequalities} and {\it non-balanced lifted biclique inequalities}.

\end{itemize}

\paragraph{Outline} The outline of this paper is as follows.
In the next paragraph, we fix the notation.
In Section 2, we study the Total Matching Polytope for Trees. 
In Section 3, we present new families of valid inequalities, the {\it balanced biclique inequalities}, which are facet-defining for all the graphs, and the {\it non-balanced biclique inequalities}, which are facet-defining for bipartite graphs.
Finally, in Section 4, we conclude as future works.
\paragraph{Notation}

Given a graph $G=(V,E)$, we define $n = |V|$ and $m = |E|$. 
For a vertex $v \in V$, we denote by $\delta(v)$ the set of edges incident to $v$ and by $N_{G}(v)$ the set of vertices adjacent to $v$. 
The degree of a vertex is $|\delta(v)|$, in particular, we denote by $\Delta(G):= \max\{|\delta(v)| \mid v \in V\}$. 
For a subset of vertices $U \subseteq V$, let $G[U]$ be the subgraph induced by $U$ on $G$.
We define $\delta(U):=\{e\in E \mid e=\{u,v\}, u \in U, v \in V \setminus U \}$. A total matching of maximum cardinality is denoted by $\nu_T(G)$.
%
%A total graph $T$ of a given graph $G$ is a graph with
%
%vertex set the vertices and edges of $G$ and two vertices are adjacent in $T$ 
%
%if and only if their corresponding elements are either adjacent or incident in $G$.
%
A graph is \textit{chordal} if every cycle of length greater or equal than four has a \textit{chord}, that is, there is an edge connecting two non consecutive vertices of the cycle.
Given a graph $G$, a balanced biclique $K_{r,r}$ is a complete bipartite graph 
whose the cardinality of the two partitions of vertices is the same.

\section{Complete description of Trees}

First, we recall the basic important inequalities of $P_{T}(G)$.
\begin{proposition}
$P_{T}(G)$ has the following valid inequalities:
\begin{align}
\label{m6:c1} & x_v + \sum_{e \in \delta(v)} y_{e} \leq 1 & \forall v \in V \\
\label{m6:c2} & x_{v} + x_{w} + y_{e} \leq 1  & \forall e=\{v,w\} \in E \\
\label{m6:c3} & x_{v},y_{e} \geq 0 & \forall v \in V, \forall e \in E .
\end{align}
\end{proposition}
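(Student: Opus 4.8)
The plan is to verify each inequality directly on the integer points of $P_T(G)$ — namely the characteristic vectors $\chi[T]$ of total matchings — and then invoke convexity. Since $P_T(G)$ is by definition the convex hull of such vectors, any linear inequality that holds at every $\chi[T]$ automatically holds on the whole polytope; hence it suffices to check that each constraint is satisfied by an arbitrary total matching $T$ with $z = (x,y) = \chi[T]$.

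The unifying idea I would use is that each of \eqref{m6:c1} and \eqref{m6:c2} corresponds to a subset of elements of $D = V \cup E$ that are pairwise adjacent (equivalently, pairwise non-independent). A total matching, being a set of pairwise independent elements, can contain at most one element of any such subset, so summing the corresponding coordinates of $\chi[T]$ yields at most $1$.

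Concretely, for \eqref{m6:c1} I would fix $v \in V$ and consider the set $\{v\} \cup \delta(v)$. By the definition of adjacency, $v$ is adjacent to every edge of $\delta(v)$ (a vertex and an incident edge), and any two distinct edges $e, e' \in \delta(v)$ share the endpoint $v$, so they are incident edges and hence adjacent. Therefore these $1 + |\delta(v)|$ elements are pairwise adjacent, $T$ contains at most one of them, and $x_v + \sum_{e \in \delta(v)} y_e \leq 1$ follows. For \eqref{m6:c2} I fix $e = \{v,w\} \in E$ and consider $\{v, w, e\}$: the vertices $v$ and $w$ are adjacent as the two endpoints of $e$, while $e$ is adjacent to each of $v$ and $w$ as an incident edge; again the three are pairwise adjacent, so at most one lies in $T$, giving $x_v + x_w + y_e \leq 1$. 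Finally, \eqref{m6:c3} is immediate, since every coordinate of a characteristic vector lies in $\{0,1\}$ and nonnegativity is preserved under convex combinations.

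I do not expect a genuine obstacle here; the only point requiring care is the bookkeeping in the mixed vertex–edge case, i.e. correctly reading off the adjacency relation between a vertex and an incident edge, and between two distinct edges sharing an endpoint. Once the ``clique-type'' reading of each inequality is fixed, validity is a direct consequence of the pairwise independence defining a total matching. It is worth noting that these constraints are precisely the natural analogues of the clique inequalities for the Stable Set Polytope and of the degree and edge constraints for the Matching Polytope, which is the reason they form the basic valid inequalities of $P_T(G)$.
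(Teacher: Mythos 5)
Your proof is correct and complete: reducing each of \eqref{m6:c1} and \eqref{m6:c2} to a set of pairwise adjacent elements of $D$, observing that a total matching meets such a set in at most one element, and then invoking convexity is exactly the right argument, and \eqref{m6:c3} is indeed immediate. The paper itself gives no proof of this proposition (it only recalls the inequalities and cites prior work for the stronger facet-defining claim), but your ``clique-type'' reading is precisely the viewpoint the paper later exploits in the Tree section, where these inequalities are identified with the maximal clique inequalities of the total graph $T(G)$; so your argument is the natural one and is consistent with the paper's framework.
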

In \cite{Polyhedra} the authors prove that these inequalities are facet-defining.  
Since $\nu_{T}(G)$ is polynomial for Trees, we study the linear description of such graphs.
The following definition will be used to prove the main result of this Section.
\begin{definition}
Given a graph $G$, the total graph $T(G)$ of $G$ is a graph with
vertex set the vertices and edges of $G$, and two vertices are adjacent in $T$ 
if and only if their corresponding elements are either adjacent or incident in $G$.
\end{definition}
%Indeed, we show the following.
%
\begin{theorem}
Let $G$ be a Tree. Then $P_{T}(G)$ is completely defined by inequalities
\eqref{m6:c1} -- \eqref{m6:c3}.
%\begin{align*}
%& \sum_{e \in \delta(v)} y_{e} \leq 1-x_{v} & \forall v \in V \\
%& x_{v} + x_{w} \leq 1 - y_{e}  & \forall e=\{v,w\} \in E \\
%& y_{e},x_{v} \geq 0 & \forall v \in V, \forall e \in E.
%\end{align*}
\end{theorem}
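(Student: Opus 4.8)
The plan is to recognise $P_T(G)$ as a stable set polytope and then to exploit perfection. First I would observe that, by the definition of adjacency of elements of $D = V \cup E$, a subset $T \subseteq D$ is a total matching precisely when it is a stable set of the total graph $T(G)$: by construction two elements are joined in $T(G)$ exactly when the corresponding elements are adjacent or incident in $G$, i.e.\ exactly when they are \emph{not} allowed to appear together in a total matching. Since the characteristic vectors coincide, this gives $P_T(G) = \mathrm{STAB}(T(G))$, the convex hull of incidence vectors of stable sets of $T(G)$. Thus the theorem becomes the assertion that $\mathrm{STAB}(T(G))$ is described by nonnegativity together with \eqref{m6:c1}--\eqref{m6:c2}.

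Next I would identify these inequalities with the clique inequalities of $T(G)$. A clique $Q$ of $T(G)$ splits into $Q_V = Q \cap V$ and $Q_E = Q \cap E$, where $Q_V$ induces a clique of $G$ and $Q_E$ is a set of pairwise incident edges. Because a tree is triangle-free, $|Q_V| \leq 2$ and the edges of $Q_E$ must share a common endpoint. A short case analysis on $|Q_V|$ then shows that the only maximal cliques of $T(G)$ are the stars $\{v\}\cup\delta(v)$ for $v\in V$ and the triangles $\{v,w,e\}$ for $e=\{v,w\}\in E$. These are exactly the supports of \eqref{m6:c1} and \eqref{m6:c2} (the instances of \eqref{m6:c1} at degree-one vertices being dominated by the corresponding \eqref{m6:c2}, hence redundant). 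Consequently the polyhedron defined by \eqref{m6:c1}--\eqref{m6:c3} is precisely $\mathrm{QSTAB}(T(G))$, the relaxation given by nonnegativity and all clique inequalities.

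The heart of the argument, and the step I expect to be the main obstacle, is to prove that $T(G)$ is a \emph{perfect} graph. Here I would use that $T(G)$ is the square of the subdivision graph $S(G)$ (which inserts one new vertex on each edge of $G$): two elements of $D$ lie at distance at most $2$ in $S(G)$ if and only if they are adjacent or incident in $G$, so $T(G)=S(G)^2$. When $G$ is a tree, $S(G)$ is again a tree, and I would show that the square of any tree is chordal, hence perfect. To see chordality, suppose $C=c_1\dots c_k$ is an induced cycle of length $k\geq 4$ in $S(G)^2$; the $c_i$ are distinct vertices of the tree $S(G)$, and the minimal subtree $R$ spanning them has a leaf, which by minimality must be some terminal $c_j$. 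Let $u$ be the unique neighbour of $c_j$ in $R$. The cycle-neighbours $c_{j-1},c_{j+1}$ lie within tree-distance $2$ of $c_j$, and their paths to $c_j$ run inside $R$, hence through $u$; therefore $\mathrm{dist}(u,c_{j-1})\leq 1$ and $\mathrm{dist}(u,c_{j+1})\leq 1$, giving $\mathrm{dist}(c_{j-1},c_{j+1})\leq 2$. Thus $c_{j-1}c_{j+1}$ is an edge of $S(G)^2$, a chord of $C$ (since $k\geq 4$), contradicting that $C$ is induced.

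Finally I would invoke the classical theorem (Chvátal, Fulkerson, Lovász) that a graph $H$ is perfect if and only if $\mathrm{STAB}(H)=\mathrm{QSTAB}(H)$. Applying it to the perfect graph $H=T(G)$ yields
\[
P_T(G)=\mathrm{STAB}(T(G))=\mathrm{QSTAB}(T(G)),
\]
which by the clique identification of the second paragraph is exactly the system \eqref{m6:c1}--\eqref{m6:c3}. The only genuinely non-routine ingredient is the perfection of $T(G)$; once chordality of the square of the subdivision tree is established, everything else is bookkeeping about which cliques occur.
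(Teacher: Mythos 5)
Your proof is correct and follows essentially the same route as the paper: identify $P_T(G)$ with $\mathrm{STAB}(T(G))$, observe that the total graph of a tree is chordal and hence perfect, and match the maximal cliques of $T(G)$ with the inequalities \eqref{m6:c1}--\eqref{m6:c2}. The only difference is that you prove chordality directly (via the square of the subdivision tree) and carry out the clique case analysis explicitly, where the paper cites or asserts these facts.
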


\begin{proof}
Consider the total graph $T(G)$ of $G$. 
In \cite{Yannakakis}, it is proved that a graph is a Tree if and only if its total graph is chordal. 
Since chordal graphs are perfect graphs, and stable sets of the total graph correspond to total matchings of $G$, see \cite{Polyhedra}, we have that
$STAB(T(G))=\{x \in \mathbb{R}^{|V(T(G))|} \mid \sum\limits_{v \in K}x_v \leq 1, $ for every clique $K$ of $T(G), x_v \geq 0, \forall v \in V(T(G)) \}$.
Finally, maximal cliques of the Stable Set Polytope of $T(G)$ correspond to basic inequalities \eqref{m6:c1} -- \eqref{m6:c3} of the initial graph.
This completes the proof.
\qed
\end{proof}
This permits us to give an alternative polyhedral proof of the optimization problem for Trees.
\begin{corollary}
The optimization problem on $P_{T}(G)$ for a Tree graph can be solved in polynomial time.
\end{corollary}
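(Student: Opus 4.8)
The plan is to leverage the complete linear description just established. First I would recall that, for a Tree $G$, the preceding Theorem guarantees that $P_{T}(G)$ coincides exactly with the polytope defined by inequalities \eqref{m6:c1}--\eqref{m6:c3}. The crucial observation is then a counting one: the system contains exactly one inequality of type \eqref{m6:c1} per vertex, one inequality of type \eqref{m6:c2} per edge, and one nonnegativity constraint \eqref{m6:c3} per variable, so in total it consists of $O(n+m)$ linear inequalities in the $n+m$ variables $z=(x,y)$. In other words, we possess an \emph{explicit} and complete description of $P_{T}(G)$ whose size is polynomial in the size of $G$.

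Given such a description, I would invoke the polynomial-time solvability of linear programming over an explicitly given system: maximizing any linear objective $c^{\top} z$ over a polytope presented by polynomially many linear inequalities can be carried out in polynomial time, for instance by the ellipsoid method or by an interior-point method. Taking the objective that sums all vertex and edge variables (or, more generally, any linear weight function) therefore returns an optimal point of this linear program in polynomial time.

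Finally, I would close the argument by integrality. Since $P_{T}(G)$ is by definition the convex hull of the characteristic vectors $\chi[T]$ of total matchings, and since the Theorem identifies this convex hull with the feasible region of the linear program above, every vertex of that region is the incidence vector of a total matching. Hence an optimal vertex solution returned by the LP algorithm is automatically integral and encodes a maximum total matching, yielding $\nu_{T}(G)$ in polynomial time. I do not expect a genuine obstacle here, as the entire combinatorial content has been front-loaded into the Theorem; the only points deserving explicit mention are that the number of facets is polynomial, so that the formulation can be written down and solved directly without recourse to a separation oracle, and that optimizing over an integral polytope delivers an integral optimum.
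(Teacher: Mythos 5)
Your proposal is correct and follows exactly the route the paper intends: the preceding Theorem gives an explicit complete description of $P_{T}(G)$ by the $O(n+m)$ inequalities \eqref{m6:c1}--\eqref{m6:c3}, so linear programming over this polynomial-size integral formulation yields the optimum in polynomial time. The paper leaves this argument implicit, so your spelling out of the polynomial facet count and the integrality of an optimal vertex is a faithful completion of the same approach rather than a different one.
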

%
%We notice also that the inequalities induced by a Tree on an arbitrary graph does not form a class of facet-defining inequalities for any graph $G$.
%
\section{New families of facet-defining inequalities}
The basic facet-defining inequality \eqref{m6:c2} can be seen as a balanced biclique $K_{1,1}$.
Hence, we derive a generalization of this inequality in the following.
In \cite{Leidner2012}, the author proves that $\nu_{T}(K_{r,r})=r$.
Then, we have a natural upper bound to obtain the following valid inequality.
%Given a graph $G$, a balanced biclique $K_{r,r}$ is a complete bipartite graph whose the cardinality of the two partition of vertices are the same.
%
\begin{theorem}
Let $G$ be a graph and $K_{r,r}$ be an induced balanced biclique of $G$. Then, the balanced biclique inequality:
\begin{equation}\label{complete}
 \sum\limits_{v \in V(K_{r,r})}{x_{v}}+\sum\limits_{e \in E(K_{r,r})}{y_{e}} \leq r
\end{equation}
is facet-defining for $P_{T}(G)$.
\end{theorem}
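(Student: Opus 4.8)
The plan is to use the standard indirect (maximal-face) characterization of facets. First I would record two preliminaries. \emph{Validity}: restricting any total matching $T$ of $G$ to $V(K_{r,r}) \cup E(K_{r,r})$ yields a total matching of the induced subgraph $K_{r,r}$, so the left-hand side of \eqref{complete} counts a total matching of $K_{r,r}$ and is therefore at most $\nu_T(K_{r,r}) = r$. \emph{Full-dimensionality}: the empty set, every singleton vertex, and every singleton edge are total matchings, giving the origin together with all $|V|+|E|$ unit vectors, so $\dim P_T(G) = |V|+|E|$. Let $F = \{z \in P_T(G) : \sum_{v \in V(K_{r,r})} x_v + \sum_{e \in E(K_{r,r})} y_e = r\}$ be the face in question, and suppose a valid inequality $c^\top z \le d$ satisfies $c^\top z = d$ for every $z \in F$. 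The goal is to show that $c$ is a positive scalar multiple of the coefficient vector of \eqref{complete}, which forces $\dim F = |V|+|E|-1$.

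The engine of the proof is a supply of \emph{tight} total matchings, i.e.\ total matchings meeting exactly $r$ elements of $K_{r,r}$, whose incidence vectors lie in $F$ and hence all satisfy $c^\top z = d$. Write $A = \{a_1,\dots,a_r\}$ and $B = \{b_1,\dots,b_r\}$ for the two sides and $e_{ij}$ for the edge $\{a_i,b_j\}$. The families I would use are: the stable set $A$; the stable set $B$; every perfect matching of $K_{r,r}$; and, for each pair $(i,j)$, the mixed sets $(A \setminus \{a_i\}) \cup \{e_{ij}\}$ and $(B \setminus \{b_j\}) \cup \{e_{ij}\}$. Each has exactly $r$ elements inside $K_{r,r}$, and one checks from the incidence definitions that each is pairwise independent.

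Next I would zero out the coefficients outside the biclique. For a vertex $w \notin V(K_{r,r})$, pick any perfect matching $M$ of $K_{r,r}$: since $w$ is not an endpoint of any edge of $E(K_{r,r})$, the set $M \cup \{w\}$ is again a total matching, and both $M$ and $M \cup \{w\}$ lie in $F$, so subtracting the two equations $c^\top z = d$ gives $c_w = 0$. For an edge $f \notin E(K_{r,r})$, at least one endpoint lies outside $V(K_{r,r})$ because $K_{r,r}$ is induced; taking whichever of the stable sets $A$, $B$ is disjoint from the endpoints of $f$, one obtains a tight total matching that stays independent after adding $f$, whence $c_f = 0$.

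Finally I would equate the surviving coefficients. Comparing $A$ with $(A \setminus \{a_i\}) \cup \{e_{ij}\}$ gives $c_{e_{ij}} = c_{a_i}$, and comparing $B$ with $(B \setminus \{b_j\}) \cup \{e_{ij}\}$ gives $c_{e_{ij}} = c_{b_j}$; hence $c_{a_i} = c_{e_{ij}} = c_{b_j}$ for all $i,j$, which forces a single common value $\lambda$ on all of $V(K_{r,r}) \cup E(K_{r,r})$. Evaluating on $A$ gives $d = r\lambda$, and $\lambda > 0$ because the origin lies in $P_T(G) \setminus F$ and so satisfies $0 < d = r\lambda$. Thus $c^\top z \le d$ is a positive multiple of \eqref{complete}, and $F$ is a facet. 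The step demanding the most care is not any single computation but the bookkeeping with the vertex–edge incidence relation: at each stage one must verify both that the constructed set is genuinely a total matching and that adjoining an external element leaves the count defining $F$ unchanged.
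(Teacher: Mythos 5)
Your proposal is correct and follows essentially the same route as the paper: the standard indirect facet argument using tight total matchings built from the biclique's perfect matchings and its two stable sides, swap arguments to equate the internal coefficients, and extensions by external vertices/edges (valid because the biclique is induced) to zero out the remaining coefficients. The only cosmetic difference is that you equate vertex and edge coefficients by swapping an edge into a side $A$ or $B$, while the paper swaps an endpoint into a perfect matching; both yield the same identities.
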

\begin{proof}
Let $V(K_{r,r}):=A_1 \cup A_2$, where $A_1:=\{v_{1}, \dots, v_{r} \}$ and $A_2:=\{v_{r+1}, \dots, v_{2r}\}$.
Let $F$ be a face of $P_{T}(G)$, $F = \{z \in P_{T}(G) \mid \lambda^{T}z = \lambda_{0} \}$ and let $\Tilde{F} = \{z \in P_{T}(G) \mid \pi^{T}z = \pi_{0} \}$ be the face corresponding to the inequality \eqref{complete}, such that $\Tilde{F} \subseteq F$. 
We want to prove that there exists $a \in \mathbb{R}$ such that $(\lambda,\lambda_0) = a(\pi,\pi_0)$.
%the inequality defining $F$ is a scalar multiple of the biclique inequality, 
%so that $\Tilde{F} = F$ by maximality.
%
%First, the inequality \eqref{complete} is a valid inequality for $P_{T}(G)$, this result was proved in \cite{Leidner2012}.
%
Since $K_{r,r}$ is a $r$-regular bipartite graph, then $E(K_{r,r})$ can be partitioned in $r$ perfect matchings $M_1, \dots, M_r$. 
%such that $\bigcup\limits_{i=1}^{r}M_{i} = E(K_{r,r})$.
%
First, we observe that $\chi[M_{i}] \in \Tilde{F}$, $\forall i = 1, \dots, r$.
Consider one of the perfect matchings, let $M_{i}$ and an edge $e=\{u,v\} \in M_{i}$.
Now, define the total matchings $T_{u} := (M_{i} \setminus \{e\}) \cup \{u\}$ and $T_{v} := (M_{i} \setminus \{e\}) \cup \{v\}$.
Note that, since the cardinality of the total matchings described is equal to the cardinality of the perfect matchings, this implies that $\chi[T_{u}],\chi[T_{v}] \in \Tilde{F}$.
We have that $\lambda^{T} \chi[M_{i}] = \lambda^{T} \chi[T_{u}] = \lambda^{T} \chi[T_{v}]$ and so $\lambda_{u}=\lambda_{v}=\lambda_{e}$.
By construction, every vertex is touched by all the perfect matchings, so $\lambda_v = \lambda_{e}$, $\forall e \in \delta(v)$, and
%
%This holds $\forall u,v \in V$ and every $e \in E$. T
thus $\lambda_{i}=a \in \mathbb{R}$ for all the coefficients on $G[K_{r,r}]$. 
By fixing one of the total matchings $T$ introduced, we obtain also that $\lambda^{T}\chi[T]= \lambda_0 = a\pi_0$.
Now, fix a perfect matching $M$ on $G[K_{r,r}]$ and consider a vertex $w \notin V(K_{r,r})$.
Since $M \cap \{w\} = \emptyset$, $T_{w}:=M \cup \{w\}$ is a total matching, whose characteristic vector lies on $\Tilde{F}$.
This implies that $\lambda_w = 0$, $\forall w \notin V(K_{r,r})$.
Then, consider $T_{A_i}^{e}:=A_i \cup \{e\}$, where $e \notin \delta(A_i)$, for $i=1,2$.
It turns out that $T_{A_i}^{e}$ is a total matching and in particular its characteristic vector lies on $ \Tilde{F}$.
Thus $\lambda_e = 0$, $\forall e \notin E(K_{r,r})$.
This completes the proof, since we have proved that $(\lambda,\lambda_0) = a(\pi,\pi_0)$.
\qed
\end{proof}
Now, we focus on the computational complexity of the separation problem associated with balanced biclique inequalities.
%
%In general it is NP-hard to find a maximum violated balanced biclique.
%
In \cite{MEB}, the authors prove that it is NP-hard to compute a vertex maximum biclique.
The problem studied in \cite{MEB} is a specific instance of the Maximum Weighted Total Biclique Problem (MWTBP), which calls for a balanced biclique of maximum weight on vertices and edges.
Thus, we derive that in general, it is NP-hard to find the most violated balanced biclique inequality.
%
%We can state the following
%\begin{proposition}
%The MWT 
%\end{proposition}
Since balanced biclique inequalities are facet-defining for the Total Matching Polytope, it is natural to ask if a non-balanced biclique generates
a facet-defining inequality.
Now, consider a general non-balanced biclique
$K_{r,s}$, with $s>r$. 
The biclique-inequality corresponding to the graph $K_{r,s}$ reads as follows:
\begin{align}
    \sum\limits_{v \in V(K_{r,s})}x_v + \sum\limits_{e \in E(K_{r,s})}y_e \leq s. 
\end{align}
It turns out that these inequalities are valid in general, but not facet-defining.
To get a facet-defining inequality we modify the coefficients.
By applying a sequential lifting, \cite{Zemel,Padberg1973}, we start the process with a suitable subset of variables.
From now on, let $K_{r,s}$ be a non-balanced biclique such that $V(K_{r,s})= R \cup S$ where $R:=\{v_1,v_2, \dots, v_r\}$ and $S:=\{w_1,w_2, \dots, w_s\}$ with  $s>r$ be the partitions of the vertices.

\begin{proposition}\label{lifting biclique}
Let $K_{r,s}$ with $s>r>1$ be a non-balanced biclique.
Then, the \textit{non-balanced lifted biclique inequalities}:
\begin{align}\label{biclique}
    \sum\limits_{i \in R}\alpha_i^{t}x_{i}+ \sum\limits_{j \in S}x_{j}+ \sum\limits_{e \in E(K_{r,s})} y_e \leq s & & \forall t \in R
\end{align}
where:
\begin{equation*}
  \alpha_{i}^{t}=\left\{
  \begin{array}{@{}ll@{}}
    s-(r-1) & \text{if}\ t = v_i, \\
    1 & \text{otherwise}.
  \end{array}\right.
\end{equation*} 
are facet-defining for $P_T(K_{r,s})$.
\end{proposition}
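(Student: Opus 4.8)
The plan is to prove the statement for the single index $t=v_1$; the $r$ inequalities in \eqref{biclique} are permuted into one another by the automorphisms of $K_{r,s}$ fixing the classes $R$ and $S$, so one case suffices. Write $\pi$ for the coefficient vector of the $t=v_1$ inequality, so that $\pi_{v_1}=s-r+1$ while every other vertex and edge coefficient equals $1$, and $\pi_0=s$. First I record that $P_T(K_{r,s})$ is full-dimensional: the empty total matching together with all singletons $\{a\}$, $a\in V\cup E$, gives $(r+s)+rs+1$ affinely independent points. I would then argue exactly as in the proof of the balanced biclique inequality \eqref{complete}: let $\tilde F$ be the face defined by $\pi^{T}z=\pi_0$, suppose $\tilde F\subseteq F:=\{z:\lambda^{T}z=\lambda_0\}$ for some valid inequality $\lambda^{T}z\le\lambda_0$, and show $(\lambda,\lambda_0)=a(\pi,\pi_0)$ for a scalar $a$. (The value $s-r+1$ is precisely the sequential-lifting coefficient $\pi_0-\max\{\pi^{T}z : z\in P_T(K_{r,s}),\ x_{v_1}=1\}$ of $x_{v_1}$: selecting $v_1$ forbids every vertex of $S$ and every edge of $\delta(v_1)$, leaving a total matching of $K_{r-1,s}$ of value at most $r-1$. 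This motivates the coefficient, but I would not rely on it for facetness.)

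The face $\tilde F$ carries two families of total matchings, reflecting that in $K_{r,s}$ the selected vertices of a total matching all lie on one side. \emph{Type~I} matchings select $v_1$: for a matching $M$ saturating $\{v_2,\dots,v_r\}$ into $S$, the set $\{v_1\}\cup M$ has $\pi$-value $(s-r+1)+(r-1)=s$, and so do the sets obtained by replacing a matched edge $v_iw_a\in M$ by its $R$-endpoint $v_i$. \emph{Type~II} matchings select only vertices of $S$: for $S_0\subseteq S$ with $|S_0|=s-r$ and a perfect matching $M$ of the balanced biclique between $R$ and $S\setminus S_0$, the set $S_0\cup M$ has $\pi$-value $(s-r)+r=s$, and so do the sets obtained by replacing a matched edge $v_iw_a$ by its $S$-endpoint $w_a$. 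A routine case analysis (the selected vertices lie on one side, and the coefficients plus a maximum matching on the complementary side never exceed $s$) confirms $\pi^{T}z\le s$ is valid, so all of these lie on $\tilde F$.

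I would then read off the coefficients. Comparing a Type~II matching with its edge-for-$S$-vertex replacement forces $\lambda_{e}=\lambda_{w(e)}$, where $w(e)$ is the $S$-endpoint of $e$; since every edge occurs in such a matching, this holds for all $e$. Comparing a Type~I matching with its edge-for-$R$-vertex replacement forces $\lambda_{v_iw_a}=\lambda_{v_i}$ for every $i\ge2$ and every $a$ (here $s>r-1$ guarantees the needed matching exists). Combining the two relations yields $\lambda_{v_i}=\lambda_{w_a}$ for all $i\ge2$ and all $a$, hence a single common value $a:=\lambda_{v_2}=\dots=\lambda_{v_r}=\lambda_{w_1}=\dots=\lambda_{w_s}$, and then $\lambda_e=a$ for every edge. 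Finally, the total matching selecting all of $S$ gives $\lambda_0=\sum_{j}\lambda_{w_j}=sa$, and the one selecting all of $R$ gives $\lambda_{v_1}+(r-1)a=\lambda_0$, whence $\lambda_{v_1}=(s-r+1)a$. Thus $(\lambda,\lambda_0)=a(\pi,\pi_0)$, and $P_T(K_{r,s})$ being full-dimensional, $\tilde F$ is a facet.

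The step I expect to be delicate is the coupling of the two families, which is forced by the asymmetry of the inequality. Because $v_1$ is adjacent to all of $S$, one cannot replace a matched edge by its $S$-endpoint inside a Type~I matching (the resulting vertex set is not independent), nor by its $R$-endpoint inside a Type~II matching; so each family supplies only \emph{one} of the equalities $\lambda_e=\lambda_{w(e)}$ and $\lambda_{v_iw_a}=\lambda_{v_i}$, and the argument must route through the edge variables to join the $S$-side and $R$-side coefficients. This is also where $r>1$ enters: some $v_i$ with $i\ge2$ must exist to transfer equality onto the vertices $w_a$ of $S$, and for $r=1$ the coefficient pattern degenerates. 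The reduction to $t=v_1$ by symmetry then delivers the statement for every $t\in R$.
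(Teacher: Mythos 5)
Your argument is correct, and its core — the facetness verification — is essentially the argument the paper itself gives in the second half of its proof. The paper compares $\chi[S]$ with $\chi[(S\setminus\{w_j\})\cup\{e\}]$ and $\chi[R]$ with $\chi[(R\setminus\{v_i\})\cup\{f\}]$ ($i\neq 1$) to get $\pi_{w_j}=\pi_e$ and $\pi_{v_i}=\pi_f$, then chains through the edges and uses $\chi[R],\chi[S]\in F$ to pin down $\pi_{v_1}=(s-r+1)a$; your Type~I/Type~II exchanges are the same swaps performed from different base points on the face and yield exactly the same system of equalities, including the final step $\lambda_0=sa$, $\lambda_{v_1}+(r-1)a=\lambda_0$. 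Where you genuinely diverge is in how the inequality and its validity are obtained: the paper runs a sequential lifting from $\sum_{v\in S}x_v+\sum_e y_e\le s$ on the restriction $\{x_v=0,\ v\in R\}$, computing the lifting coefficients $\alpha_{v_1}=s-r+1$ and $\alpha_{v_i}=1$ for $i\ge 2$ one by one, and accounts for the $r$ inequalities by cycling the lifting order; you instead verify validity by a direct case split on which side of the bipartition carries the selected vertices, and dispose of the $r$ cases by the automorphisms of $K_{r,s}$ fixing the two classes. Your route is more self-contained (no appeal to the lifting theorem) and the symmetry reduction is cleaner than the paper's permutation discussion; the paper's route has the advantage of explaining where the coefficient $s-r+1$ comes from and of generalizing mechanically to other starting inequalities. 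One small caveat: your validity step is only sketched as ``a routine case analysis,'' and since validity is what makes $\tilde F$ a face at all, you should write out the three cases (selected vertices in $S$; in $R$ containing $v_1$; in $R$ avoiding $v_1$) explicitly in a final version.
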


\begin{proof}
Let $N := \{0,1\}^{s+r+(s\times r)} \cap P_{T}(K_{r,s})$.
We want to lift the valid inequality $\sum\limits_{v \in S}x_v + \sum\limits_{e \in E(K_{r,s})}y_e \leq s$ 
%( the reader can check easily that is facet-defining) 
of $N':=\mbox{conv}(N) \cap \{x_v = 0, v \in R\}$ 
into a facet-defining $\sum\limits_{i \in R}\alpha_{i}x_{i} + \sum\limits_{j \in S}x_{j} + \sum\limits_{e \in E(K_{r,s})}y_e \leq s$ of $\mbox{conv}(N)=P_{T}(K_{r,s})$.
We perform a sequential lifting of the coefficients according to the ordering of the index set $\{1,2, \dots, r\}$ of vertices of $R$. 
Now, consider the largest coefficient relative to $x_{v_1}$:
\begin{align*}
  \alpha_{v_1} := & s -\max\sum\limits_{i \in S}x_{i}+ \sum\limits_{e \in E(K_{r,s})}y_e \\  
  \mbox{s.t.} \quad & x_{v_{1}}=1, z \in N, \\
                    & x_{j} = 0, j \in R\setminus\{v_1\}.
\end{align*}
The optimum value of the maximization problem is $r-1$, since fixing the vertex $v_1$ we can exclude all the vertices of the other side, and since $K_{r,r}$ is an induced subgraph, the cardinality of a maximum total matching can be achieved by a matching $M$ of size $r-1$. 
%so the optimum is attained with the characteristic vector of the total matching total $T_{1}:=\{v_1\} \cup M$.
%
Now, we claim that $\alpha_i = 1$ for $i \in \{v_2,v_3, \dots, v_r\}$.
%
%\begin{equation*}
%    \alpha_2 := s - \max\left\{\sum\limits_{j \in S}x_{w_j}+ \sum\limits_{e \in E(K_{r,s})}y_e + (s-r+1)x_1 \mid 
%    x_{2}=1, z \in N\, 
%    x_{v_j} = 0, j \in R\setminus\{3, \dots r\} \right\}
%\end{equation*}
%
\begin{align*}
    \alpha_{v_2} := & s - \max\sum\limits_{j \in S}x_{j}+ \sum\limits_{e \in E(K_{r,s})}y_e + (s-r+1)x_{v_1} \\ 
    \mbox{s.t.} \quad & x_{v_2}=1, z \in N, \\ 
                      & x_{j} = 0, j \in R\setminus\{v_1, v_2\}.
\end{align*}
Now, fixing the vertex $v_2$, it is easy to see that the optimal value is achieved by setting $x^{*}_{v_1}=1$ and $\chi[M_2]$, where $M_2$ is a matching of size $r-2$ induced by the vertices $v_3,\dots, v_r$ of a balanced biclique of size $r-2$. Thus, the optimal value is $(s-1)$.
We obtain that $\alpha_{v_2} = 1$.
At the step $i$ of the sequence we have: 
%\begin{equation*}
%    \alpha_i := s - \max\left\{\sum\limits_{j \in S}x_{w_j}+ \sum\limits_{e \in E(K_{r,s})}y_e + (s-r+1)x_{v_1}+\sum\limits_{t=1}^{i-1}x_{v_t} \mid 
%    x_{v_i}=1, z \in N\, 
%    x_{w_j} = 0, j \in R\setminus\{i+1, \dots, r\} \right\}
    
%\end{equation*}
%
\begin{align*}
    \alpha_{v_i} := & s - \max \sum\limits_{j \in S}x_{w_j}+ \sum\limits_{e \in E(K_{r,s})}y_e + (s-r+1)x_{v_1}+\sum\limits_{t=2}^{i-1}x_{v_t} \\
    \mbox{s.t.} \quad  & x_{v_i}=1, z \in N, \\
                       & x_{j} = 0, j \in R\setminus\{v_{1}, \dots, v_{i-1}\}. 
\end{align*}
Repeating the same reasoning we obtain that $\alpha_{v_i}=1$.
We conclude that $(s-r+1)x_{v_1} + \sum\limits_{i=2}^{r}x_{v_i} + \sum\limits_{j=1}^{s}x_{w_j} + \sum\limits_{e \in E(K_{r,s})}y_e \leq s$ is a valid inequality for $P_{T}(K_{r,s})$.
Notice that the computation of the coefficients depends on the choice of the ordering where only the first coefficient of the sequence gets the maximum value different from one. 
So, consider $\sigma:\{1,2, \dots, r\} \longrightarrow \{1,2, \dots, r\}$ such that $ \sigma(i)=i+1$.
Choosing all the sequences with respect to the permutation $\sigma$, we get all the required facet-defining inequalities.
Now, we prove that after the lifting process the final inequality is facet-defining for $P_{T}(K_{r,s})$.
%
%Consider a balanced biclique $K_{r,r} \subseteq G$ induced by the vertices $V(G[K_{r,r}]):= U \cup R$, where $U =\{v_{i_1},v_{i_2}, \dots, v_{i_r}\}$ and %$W=\{w_{j_1},w_{j_2}, \dots, w_{j_r}\}$.
%
Let $F$ be the face induced by a non-balanced lifted inequality, w.l.o.g. suppose that the coefficient of the first vertex is $\alpha_{v_{1}}=s-r+1$ and let $F'$ be a face induced by $\pi^{T}z \leq \pi_{0}$, suppose that $F \subseteq F'$.
Consider a vertex $w_j \in S$ and an edge $e$ incident to $w_j$.
Now, it turns out that $\chi[S \setminus \{w_j\} \cup \{e\}] \in F$,
thus $\pi_{w_j} = \pi_e, \forall e \in \delta(w_j)$.
Then, fix a vertex $v_i \in R, i \neq 1$ and an edge $ f \in \delta(v_i)$.
We have that $\chi[R \setminus \{v_i\} \cup \{f\}] \in F$, 
this implies that $\pi_{v_i} = \pi_f, \forall f \in \delta(v_i)$, with $v_i \neq v_1$.
Since every edge has exactly one end-point in $R$ and the other in $S$, we have that $\pi_{v} = \pi_{e} = \pi_{w}, \forall e = \{v,w\},v \neq v_1$.
%Fix a perfect matching $M$ of $G[K_{r,r}]$ and let $V_S:= V(K_{r,s}) \setminus V(G[K_{r,r}])$,
%
%we notice that the characteristic vector $\chi[M \cup V_S] \in F$.
%
%Now, consider $e=\{v_{i_k},w_{j_l}\} \in M$, then $\chi[(M \setminus \{e\}) \cup \{w_{j_l}\} \cup V_S] \in F \subseteq F'$ and $\pi^{T}\chi[M \cup V_S] = \pi^{T}\chi[(M \setminus \{e\}) \cup \{w_{j_l}\} \cup V_S]$ thus $\pi_{w_{j_l}} = \pi_e$.
%
%Repeat this procedure for all the edges of $M$ with the corresponding end-points of $S$.
%Since $e$ is an arbitrary edge, this holds for all the edges of $M$.
%
%Since $K_{r,r}$ can be decomposed into a disjoint union of $r$ perfect matchings, this implies that $\pi_w = \pi_e$, $\forall e \in \delta(w), \forall w \in W$.
%
%Then, consider all the induced balanced bicliques of cardinality $r$ which fix the vertex $v_{i_1}$ and cover the remaining vertices of $S$.
%
%Applying the same procedure to each induced biclique, we obtain that $\pi_e = \pi_w,$ $ \forall w \in S$.
%
%Now, since $\chi[R] \in F$, it turns out that $\chi[R \setminus \{v_{i_k}\} \cup \{f\}] \in F$, with $k \neq 1$ and $f \in \delta(v_{i_k}) \cap E(K_{r,r})$.
%
%This implies that $\pi_{v_{i}} = \pi_{f}$, and by the previous considerations $\pi_{v_i} = \pi_{w_j}$, with $f=\{v_i,w_j\}$.
%
Until now, we obtain that $\pi_v = \pi_w = \pi_e, \ \forall v,w \in V(K_{r,s}) \setminus \{v_{1}\}, \forall e \in E(K_{r,s})$.
For the coefficient $\pi_{v_{1}}$, $\chi[R]= \chi[S]$ implies that $\pi_{v_{1}} + \sum\limits_{i=2}^{r}\pi_{v_{i}} = \sum\limits_{j=1}^{s}\pi_{w_{j}}$, thus $\pi_{v_{1}} = (s-r+1)a$, for a scalar $a \in \mathbb{R}$.
%
%We prove that $\pi^{T} z \leq \pi_0$ is a scalar multiple of the non-balanced lifted biclique inequality.
%
This completes the proof.
\qed
\end{proof}
%
%In general it is NP-hard to find a maximum violated biclique.
%
%In \cite{MEB}, it is proved that it is NP-hard to compute a vertex maximum biclique.
\begin{figure}[!htbp]\label{biclique K_{2,3}}
\center
  \begin{tikzpicture}[scale = 0.90 ]
    \coordinate (a1) at (0,1);
    \coordinate (a2) at (0,3);
    %\coordinate (a3) at (0,0);
    \coordinate (b1) at (2,4);
    \coordinate (b2) at (2,2);
    \coordinate (b3) at (2,0);

    \draw[ line width = 0.05 cm]
    (a1)--(b1);
    \draw[ line width =0.05cm]
    (a1)--(b2);
    \draw[ line width = 0.05 cm] (a1)--(b3);
    \draw[ line width = 0.05 cm] (a2)--(b1);
    \draw[ line width = 0.05 cm] (a2)--(b2);
    \draw[ line width = 0.05 cm] (a2)--(b3);
    %\draw[ line width = 0.05 cm] (a3)--(b1);
    %\draw[ line width = 0.05 cm] (a3)--(b2);
    %\draw[ line width = 0.05 cm] (a3)--(b3);
   
    \draw[fill=red] (a1) circle [radius = 0.15 cm];
    \draw[fill=red] (a2) circle [radius = 0.15 cm];
    %\draw[fill=red] (a3) circle [radius = 0.15 cm];
    \draw[fill=blue] (b1) circle [radius = 0.15 cm];
    \draw[fill=blue] (b2) circle [radius = 0.15 cm];
    \draw[fill=blue] (b3) circle [radius = 0.15 cm];
    
    \node at (-0.5,1) {$v_2$};
    \node at (-0.5,3) {$v_1$};
    \node at (2.5,0) {$v_5$};
    \node at (2.5,2) {$v_4$};
    \node at (2.5,4) {$v_3$};

  \end{tikzpicture} 
  \caption{biclique $K_{2,3}$}
 \end{figure}
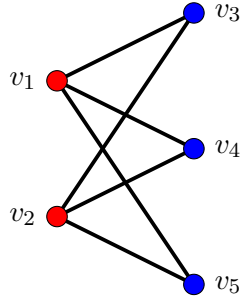
\begin{remark}
Observe that there is no lifting procedure for the coefficients of the edges of $K_{r,s}$.
\end{remark}
\begin{proof}
Let $G = K_{r,s}$.
Suppose by contradiction that there is an edge $e \in K_{r,s}$ such that $\beta_e >1$ for a non-balanced lifted biclique.
Hence, the corresponding inequality reads as follows $\sum\limits_{v \in V(K_{r,s})}x_v + \sum\limits_{f \in E(K_{r,s})\setminus\{e\}}y_f+\beta_ey_{e} \leq s$.
Let $V_S:= V(K_{r,s}) \setminus V(G[K_{r,r}])$ where $K_{r,r}$ is an induced balanced biclique of $G$
and fix a perfect matching $M$ on $G[K_{r,r}]$.
Then, $\chi[M\setminus\{e\}] + \beta_e + \chi[V_S] > s$ violates the inequality.
\qed
\end{proof}
Now, we can see an easy direct application for the biclique $K_{2,3}$.
The corresponding inequalities read as follow:
\begin{equation*}
    2x_{v_1}+x_{v_2}+x_{v_3}+x_{v_4}+x_{v_5}+\sum\limits_{e \in E(K_{2,3})}y_e \leq 3, 
\end{equation*}
\begin{equation*}
    x_{v_1}+2x_{v_2}+x_{v_3}+x_{v_4}+x_{v_5}+\sum\limits_{e \in E(K_{2,3})}y_e \leq 3.
\end{equation*}
The following proposition shows that the lifting procedure exposed in Proposition (\ref{lifting biclique}) is exhaustive and maximal, that is, it generates all the possible facet-defining induced biclique inequalities.
\begin{proposition}
Consider a non-balanced biclique $K_{r,s}$. Then, the inequality:
\begin{equation}\label{allcoeff}
    \sum\limits_{i=1}^{|R|} \pi_{v_i}x_{v_{i}} + \sum\limits_{j=1}^{|S|}x_{w_j} + \sum\limits_{e \in E(K_{r,s})}y_e \leq s,
\end{equation}
such that $\sum\limits_{i=1}^{|R|}\pi_{v_i}=s-r+1$ with at least two coefficients different from one is not facet-defining for $P_{T}(K_{r,s})$.
\end{proposition}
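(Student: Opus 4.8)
The plan is to show that any inequality of the form \eqref{allcoeff} with $\sum_{i=1}^{r}\pi_{v_i}=s-r+1$ but with two or more coefficients exceeding one cannot be facet-defining, by exhibiting an explicit affine dependency among the tight total matchings, or equivalently by showing that the induced face is properly contained in a face of larger dimension. First I would fix notation: suppose without loss of generality that $\pi_{v_1}>1$ and $\pi_{v_2}>1$, so that (since the coefficients sum to $s-r+1$ and the remaining $r-2$ coefficients are at least the base value) the integrality forces a rigid structure on which vertex subsets of $R$ can appear in a tight total matching. The key observation is that in any maximum total matching of $K_{r,s}$ one never gains by selecting a vertex of $R$ carrying a coefficient strictly greater than one together with a coefficient-one vertex of $R$, because the balanced-biclique bound \eqref{complete} caps the total value, and the two heavy coefficients cannot both be ``used'' to reach the right-hand side $s$ simultaneously.

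The heart of the argument is a comparison of the tight total matchings of \eqref{allcoeff} against those of the genuinely facet-defining single-heavy-coefficient inequality from Proposition \ref{lifting biclique}. The strategy I would pursue: let $F$ be the face of $P_T(K_{r,s})$ induced by \eqref{allcoeff}, and argue that every integral point $z=\chi[T]$ lying on $F$ in fact already satisfies one of the single-coefficient lifted inequalities \eqref{biclique} with equality. Concretely, I expect that a total matching $T$ attains the bound $s$ under \eqref{allcoeff} only if it selects no vertex of $R$ at all (using all $s$ vertices of $S$), or it selects exactly one heavy vertex $v_i$ together with a suitable matching. Because two distinct heavy coefficients cannot be activated at once without violating the underlying balanced-biclique constraint, the set of tight points of \eqref{allcoeff} splits according to which single heavy vertex (if any) is used, and each such piece lies in the face of some \eqref{biclique}. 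This would show that $F$ is contained in the intersection of two distinct facet-inducing faces, hence has dimension strictly smaller than $\dim P_T(K_{r,s})-1$, so \eqref{allcoeff} is not facet-defining.

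I would make the dimension drop rigorous by the standard coefficient method, mirroring the second half of the proof of Proposition \ref{lifting biclique}: assume $F\subseteq F'=\{z:\mu^T z=\mu_0\}$ and attempt to derive $(\mu,\mu_0)=a(\pi,\pi_0)$; the argument should fail precisely because the tight points do not span a hyperplane, forcing an extra free parameter. Equivalently, and perhaps more transparently, I would exhibit an explicit affine combination: take tight total matchings $T_1,\dots,T_k$ on $F$ and find coefficients summing to zero (not all zero) with $\sum_\ell c_\ell\,\chi[T_\ell]=0$, which certifies that the affine hull of $F\cap\{0,1\}^{n+m}$ has deficient dimension. The natural witnesses are the matchings and vertex-sets used throughout Proposition \ref{lifting biclique}, namely $\chi[S]$, the perfect matchings $\chi[M_i]$ of the embedded $K_{r,r}$, and the single-vertex augmentations $\chi[R\setminus\{v_i\}\cup\{f\}]$.

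The main obstacle I anticipate is handling the case analysis cleanly when the two excess coefficients are distributed in different ways subject to $\sum_i\pi_{v_i}=s-r+1$: one must rule out, for every admissible splitting of the ``surplus'' $s-r$ across two or more vertices of $R$, the possibility that enough tight total matchings exist to recover full dimension. The delicate point is verifying that no tight total matching can simultaneously exploit two heavy vertices, which rests on the fact that selecting $v_i,v_j\in R$ forbids all their incident edges and hence caps the achievable edge contribution; I would need to check that the resulting loss in edge (and $S$-vertex) value strictly exceeds the coefficient gain $\pi_{v_i}+\pi_{v_j}-2$, so that such points lie strictly below $s$ and never touch $F$. Establishing this strict inequality uniformly is the technical crux, after which the dimension count is routine.
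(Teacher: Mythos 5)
Your high-level strategy --- showing that the face $F'$ induced by \eqref{allcoeff} is properly contained in a face induced by a single-heavy-coefficient inequality \eqref{biclique}, hence cannot be a facet --- is exactly the paper's strategy. But the way you plan to execute it contains a genuine gap: your claimed structure of the tight points of \eqref{allcoeff} is wrong. You assert that a tight total matching either avoids $R$ entirely or ``selects exactly one heavy vertex $v_i$ together with a suitable matching,'' and you identify as the technical crux proving that no tight point uses two heavy vertices simultaneously. Both claims fail. Since at least two coefficients exceed one, each heavy coefficient satisfies $\pi_{v_i} < s-r+1$ strictly; a total matching containing $v_i$ and $k-1$ further $R$-vertices plus a matching of size at most $r-k$ then has value at most $\pi_{v_i}+(k-1)+(r-k)=\pi_{v_i}+r-1<s$ unless the selected set of $R$-vertices absorbs the \emph{entire} surplus. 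Consequently a tight point that touches $R$ must contain \emph{all} the heavy vertices at once (e.g.\ $U=\{v_1,v_2\}$ together with a matching of size $r-2$ is tight when the remaining coefficients equal one), and a point selecting exactly one heavy vertex is never tight. The lemma you plan to prove is therefore false, even though the conclusion you want to draw from it ($F'\subseteq F_1\cap F_2$) happens to survive under the corrected tight-point analysis. There is also a logical slip independent of this: if the tight points ``split'' into pieces each lying on \emph{some} face of type \eqref{biclique}, that places $F'$ in a \emph{union} of faces, not in their intersection; to conclude a dimension drop you need every tight point to lie on \emph{both} faces simultaneously, which is precisely what your split-based description does not give you.

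The paper's proof is shorter and avoids all of this: it fixes the single lifted inequality with $\alpha_{v_1}=s-r+1$, observes that every total matching tight for \eqref{allcoeff} is also tight for it (so $F'\subseteq F$), and then exhibits one explicit witness in $F\setminus F'$, namely the total matching $T_1$ consisting of $v_1$ together with the $r-1$ edges $\{v_i,w_{i-1}\}$, $i=2,\dots,r$: it attains $(s-r+1)+(r-1)=s$ on the lifted inequality but only $\pi_{v_1}+(r-1)<s$ on \eqref{allcoeff}. A single such point suffices to show the containment is proper, so no affine-dependency computation or two-facet intersection is needed. If you want to salvage your route, replace your characterization of the tight points by the correct one above and verify directly that every tight point of \eqref{allcoeff} is simultaneously tight for the lifted inequalities anchored at $v_1$ and at $v_2$; but the single-witness argument is the cleaner path.
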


\begin{proof}
We show that the face $F'$ induced by the inequality (\ref{allcoeff}) is properly contained in the face $F$ induced by a non-balanced lifted biclique inequality.
%
%Let $R:=\{v_1,v_2, \dots, v_r\}$ and $S:=\{w_1, w_2, \dots, w_s\}$ be the partitions of the vertices of $K_{r,s}$.
%
Suppose that w.l.o.g. the first $l < r$ vertices have the corresponding coefficients different from one, thus define $U_R := \{v_{1},v_{2}, \dots, v_{l}\}$
and $L:= R \setminus U_R$.
First, notice that the inequality $(s-r+1)x_{v_1} + \sum\limits_{i=2}^{r} x_{v_{i}} + \sum\limits_{j=1}^{s}x_{w_j} + \sum\limits_{e \in E(K_{r,s})}y_e \leq s $ is tight for all total matchings satisfying \eqref{allcoeff} at equality.
In particular, every characteristic vector of a total matching belonging to $F$ is of the form $\chi[U_R] + \chi[T_L]$, where $T_L$ is a total matching composed by a stable set on the vertices of $L$ and a matching in $K_{r,s}[L]$ induced on the remaining vertices.
It is easy to construct such a total matching $T_L$, for example, we can take $T_L$ as the set $L$.
%(or fix a perfect matching $M_L$ on $G[L \setminus U_L]$ where $U_L \subseteq L$ and construct $T_L:=U_L \cup M_L)$.
%
Notice also that $\chi[T_L \cup U_R] \in F$.
Then, we exhibit a point $x \in F$ such that $x \in F \setminus F'$ and $F ' \subseteq F$.
Consider the total matching $T_{1}$ defined as:
\begin{equation*}
  \chi[T_1]:=\left\{
  \begin{array}{@{}ll@{}}
    x_{v_1} = 1 &   \\
    y_{e}=1 & \forall e=\{v_i,w_{i-1}\} \ \forall i=2,3, \dots, r \\
    0 & \text{otherwise}.
  \end{array}\right.
\end{equation*} 
So, there is at least one more solution in $F$.
This concludes the proof.
\qed
\end{proof}

\begin{theorem}
Let $G$ be a bipartite graph. Then, the non-balanced lifted biclique inequalities \eqref{biclique}
are facet-defining for $P_{T}(G)$.
\end{theorem}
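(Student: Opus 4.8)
The plan is to extend the facet-defining result from the isolated biclique $P_T(K_{r,s})$ (Proposition~\ref{lifting biclique}) to an arbitrary bipartite host graph $G$ containing $K_{r,s}$ as an induced subgraph. This mirrors exactly the structure of the proof of the balanced biclique Theorem: there, one first establishes the coefficients on the biclique itself and then shows that all vertex and edge variables \emph{outside} the biclique receive coefficient zero. I would follow the same two-phase scheme here. Let $F = \{z \in P_T(G) \mid \lambda^T z = \lambda_0\}$ be a face containing the face $\tilde F$ induced by a non-balanced lifted biclique inequality \eqref{biclique}, and aim to show $(\lambda,\lambda_0) = a(\pi,\pi_0)$ for some scalar $a$.

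First I would reduce to the already-proven local statement. Every total matching of $K_{r,s}$ extends to a total matching of $G$, and by Proposition~\ref{lifting biclique} the characteristic vectors of a full-dimensional family of such total matchings lie on $\tilde F$; restricting $\lambda$ to the coordinates indexed by $V(K_{r,s}) \cup E(K_{r,s})$ and invoking the affine-independence already secured inside the biclique forces these coefficients to agree with $a\pi$ on the biclique, i.e. $\lambda_{v_1} = (s-r+1)a$, $\lambda_{v_i} = a$ for $i \neq 1$, $\lambda_{w_j} = a$, and $\lambda_e = a$ for $e \in E(K_{r,s})$. This is the same trimming argument used in the $K_{r,r}$ proof, now applied to the non-balanced coefficients.

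The substantive part is the second phase: showing $\lambda_w = 0$ for every vertex $w \notin V(K_{r,s})$ and $\lambda_f = 0$ for every edge $f \notin E(K_{r,s})$. For an external vertex $w$, I would fix a maximum total matching $T$ of $K_{r,s}$ of value $s$ that does not touch $w$ (take $T = S$, the larger side, which is a stable set of size $s$ not adjacent to $w$ unless $w \in N_G(S)$); then both $T$ and $T \cup \{w\}$ are total matchings on $\tilde F$, whence $\lambda_w = 0$. For an external edge $f \notin E(K_{r,s})$, I would exhibit a maximum total matching of the biclique of the form $A_i \cup \{f\}$ for one side $A_i$ with $f \notin \delta(A_i)$, exactly as in the balanced case, forcing $\lambda_f = 0$. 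The extra care required here is that $G$ may contain edges between $V(K_{r,s})$ and $V(G)\setminus V(K_{r,s})$, so one must verify that the chosen witnessing total matchings remain independent in all of $G$, not merely in the induced subgraph; the induced-subgraph hypothesis guarantees no edges internal to the biclique are missing, but edges leaving the biclique must be handled by choosing the stable-set side or the matching appropriately so that the added external element stays independent.

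\textbf{The hard part will be} handling external elements whose neighborhoods meet $V(K_{r,s})$ in a way that blocks the naive witness construction — specifically, an external vertex $w$ adjacent to \emph{every} vertex on the side $S$, or an external edge $f$ incident to a biclique vertex. For such obstructed elements the single witness pair may not exist, so I would instead argue by building two total matchings differing only in the coordinate of the problematic element: since bipartiteness and the induced structure give enough room to swap a single matched edge for a vertex (the $T_u, T_v$ trick from the balanced proof) and thereby free up an endpoint, one can always realize both a total matching containing the external element and one omitting it while keeping the objective at $s$. Verifying that this freedom persists for \emph{all} external elements simultaneously, and that the resulting $\lambda_0 = a\pi_0$ is consistent, is the delicate bookkeeping that completes the argument.
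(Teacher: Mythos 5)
Your plan is essentially the paper's proof in its indirect form: the paper uses the very same witnesses --- the $|V(K_{r,s})|+|E(K_{r,s})|$ affinely independent points supplied by Proposition~\ref{lifting biclique}, together with $\chi[B_i]+\chi[\{v\}]$ for $v\in A_i\setminus B_i$ and $\chi[B_i]+\chi[\{e\}]$ for $e\notin\delta(B_i)$ --- and assembles them into a block-triangular full-rank matrix, which is just the direct-counting version of your $(\lambda,\lambda_0)=a(\pi,\pi_0)$ argument. The only point you leave open, the ``hard part'', in fact dissolves: writing $A_1\cup A_2$ for the bipartition of $G$ and $B_1=R\subseteq A_1$, $B_2=S\subseteq A_2$ for the induced biclique, both $\chi[R]$ and $\chi[S]$ attain equality (the lifted coefficients on $R$ sum to $(s-r+1)+(r-1)=s$), every external vertex $w\in A_i\setminus B_i$ is non-adjacent to all of $B_i$ by bipartiteness, and every external edge meets at most one side of the biclique (inducedness rules out an external edge with one endpoint in each side), hence lies outside $\delta(B_j)$ for the other side $j$. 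So the witness pair $\chi[B_i]$, $\chi[B_i\cup\{\cdot\}]$ always exists by picking the appropriate side, and no $T_u,T_v$ swap is needed.
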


\begin{proof}
Let $V(G):= A_1 \cup A_2$ be the partition of the vertices of $G$ and $V(K_{r,s}) := B_1 \cup B_2$ a non-balanced biclique of $G$.
We denote by $F$ the face induced by a non-balanced lifted biclique inequality.
%
%We have to find $n+m$ affinely independent points belonging to $F$. 
%
By Proposition (\ref{lifting biclique}), we have $|V(K_{r,s})|+|E(K_{r,s})|$ affinely independent points belonging to $F$.
Now, the vectors of the form $\chi[B_i] + \chi[\{v\}]$, where $v \in A_i \setminus B_i$ for $i=1,2$, lie on $F$ and they are linearly independent.
Then, consider $\chi[B_i] + \chi[\{e\}] $, where $e \notin \delta(B_i)$, for $i=1,2$.
Note that they are characteristic vectors of total matchings and they are linearly independent.
The final matrix, having as columns these vectors, has the following form:
\begin{center}
$\left[
\begin{array}{c|c|c}
A_{K_{r,s}} & B_{K_{r,s}} & C_{K_{r,s}} \\ \hline
\mathbf{0} & \widetilde{I_{v}} & \mathbf{0} \\ \hline
\mathbf{0} & \mathbf{0} & \widetilde{I_{e}}\\
\end{array}\right],
$
\end{center}
where $A_{K_{r,s}},B_{K_{r,s}},C_{K_{r,s}}$ represent the matrices corresponding to the characteristic vectors of total matchings on $K_{r,s}$,
and $\widetilde{I_{v}},\widetilde{I_{e}}$ are the identity matrices relative to the elements not in $K_{r,s}$.
Since the matrix has maximum rank, we can conclude. 
\qed
\end{proof}
%\section{Total Matching Polytope for Trees and $K_{r,s}$}
%
Now, we are ready to state the final theorem, since we have the full list of facet-defining inequalities describing $P_{T}(K_{r,s})$.
\begin{theorem}
$P_{T}(K_{r,s})$ is completely defined by:
\begin{itemize}
    \item Basic inequalities \eqref{m6:c1} -- \eqref{m6:c3},
    \item Balanced biclique inequalities \eqref{complete},
    \item Non-balanced lifted biclique inequalities \eqref{biclique}.
\end{itemize}
\end{theorem}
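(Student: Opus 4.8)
The plan is to prove the two inclusions separately. Validity of all three families, hence $P_{T}(K_{r,s})\subseteq Q$ where $Q$ denotes the polytope they cut out, is already established: the basic inequalities \eqref{m6:c1}--\eqref{m6:c3} by the opening Proposition, the balanced biclique inequality \eqref{complete} and the non-balanced lifted biclique inequality \eqref{biclique} by the two preceding theorems. The content is therefore the reverse inclusion, which I would obtain by showing that every facet-defining inequality of $P_{T}(K_{r,s})$ is a positive multiple of one of the listed inequalities. First I would record that $P_{T}(K_{r,s})$ is full-dimensional: the vector $\chi[\emptyset]=0$ together with the $|V|+|E|$ unit vectors $\chi[\{a\}]$, $a\in V\cup E$, are total-matching vectors whose pairwise differences span the whole space, so they are affinely independent. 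Hence each facet has a representation $\pi^{T}z\le\pi_{0}$ that is unique up to scaling, and I may argue directly on the coefficient vector $\pi$.

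Next I would reduce to the essential case $\pi_{0}>0$, $\pi\ge 0$. Since every subset of a total matching is again a total matching, if some coefficient satisfied $\pi_{a}<0$ then deleting $a$ from any tight total matching containing it would strictly increase the left-hand side, contradicting validity; thus no tight point uses such an $a$, the facet lies in $\{z_{a}=0\}$, and by maximality it coincides with the nonnegativity facet of \eqref{m6:c3}. The same monotonicity shows that a facet with $\pi_{0}=0$ must be one of the nonnegativity facets. This leaves the facets with $\pi_{0}>0$ and $\pi\ge 0$, on which the main analysis is performed.

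For these I would run the exchange analysis already used in the biclique proofs: whenever an edge $e=\{u,v\}$ lies in a tight total matching $T$, the moves $T\mapsto(T\setminus\{e\})\cup\{u\}$ and $T\mapsto(T\setminus\{e\})\cup\{v\}$ produce total matchings of equal cardinality, and when these remain tight they force $\pi_{u}=\pi_{v}=\pi_{e}$. Tracking exactly which exchanges preserve tightness, I would show that the positive support of $\pi$ is either a star $\{v\}\cup\delta(v)$, in which case the matching constraint at $v$ pins $\pi$ to the clique inequality \eqref{m6:c1}, or it is the vertex-and-edge set of an induced sub-biclique $K_{a,b}\subseteq K_{r,s}$ on which the coefficients are constant along the edges and on the larger side. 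In the latter case, evaluating $\pi_{0}$ against $\nu_{T}(K_{a,b})$ fixes the right-hand side; splitting into $a=b$ recovers \eqref{complete} (with $a=b=1$ giving the basic edge inequality \eqref{m6:c2}), while $a<b$, together with the preceding Proposition on the maximality of the lifting and the Remark on the absence of edge lifting, forces exactly the lifted pattern \eqref{biclique}.

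The main obstacle is precisely this classification of the positive support: proving that it can only be a star or an induced biclique, and never a more irregular mixture of vertices and edges, and, in the non-balanced case, that the small-side coefficients cannot distribute in a way outside the single-heavy-vertex family \eqref{biclique}. Making this rigorous requires, for each non-conforming candidate support, the explicit construction of a tight total matching violating the putative coefficient equalities, while simultaneously controlling the interaction between the two vertex classes $R$ and $S$; the degenerate bicliques $K_{1,b}$ and $K_{1,1}$ must also be checked to collapse onto the basic inequalities. I expect this case analysis, rather than any single computation, to be the technical heart of the proof.
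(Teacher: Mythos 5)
First, a point of comparison: the paper states this theorem without any proof at all --- it follows the sentence ``we have the full list of facet-defining inequalities'' and then moves directly to the conclusion. The preceding results establish only that each listed inequality \emph{is} facet-defining (and that one particular alternative coefficient pattern is not); they do not establish that the list is \emph{exhaustive}, which is exactly the content of the theorem. So your identification of the reverse inclusion as the real content is correct, and your preliminary reductions are sound and go beyond anything in the paper: full-dimensionality via the empty set and the singletons, and the elimination of facets with a negative coefficient or with $\pi_0=0$ by down-monotonicity of the family of total matchings, leaving only the case $\pi\ge 0$, $\pi_0>0$.

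However, your proposal is not yet a proof. The decisive claim --- that for a nontrivial facet the positive support of $\pi$ must be either a star $\{v\}\cup\delta(v)$ or the element set of an induced sub-biclique carrying exactly the balanced or single-heavy-vertex lifted coefficient pattern --- is announced but never argued, and you acknowledge this yourself. That is where the entire difficulty of the theorem is concentrated: the exchange moves $T\mapsto (T\setminus\{e\})\cup\{u\}$ yield coefficient equalities only when the modified sets remain tight, and in the earlier facet proofs tightness is checked against a \emph{known} inequality with known support, whereas here you must produce enough tight total matchings for an \emph{unknown} facet to propagate the equalities across an arbitrary support and to exclude irregular mixtures of vertices and edges. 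As written, the argument establishes $P_T(K_{r,s})\subseteq Q$ together with some easy structural constraints on any putative extra facet of $P_T(K_{r,s})$, but the inclusion $Q\subseteq P_T(K_{r,s})$ remains open. Note also that the shortcut used for trees (perfection of the total graph) is unavailable here: the fact that the biclique inequalities with right-hand side $r\ge 2$ are facets and are not clique inequalities shows that $T(K_{r,s})$ is not perfect, so a complete proof must either carry out your support classification in full or argue integrality of the vertices of $Q$ by some other means.
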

\section{Conclusion and future works}
In this paper, we have introduced two new families of facet-defining inequalities for the Total Matching Polytope
and we have found complete linear description for Trees and Complete Bipartite graphs.
As future work, we plan to give a complete linear description of the Total Matching Polytope for bipartite graphs.
%

%%%%%%%%%%%%%%%%%%%%%%%%%%%%%%%%%%%%%%%%%%%%%%%
\section*{Acknowledgments}
The research was partially supported by the Italian Ministry of Education, University and Research (MIUR): Dipartimenti di Eccellenza Program (2018--2022) - Dept. of Mathematics ``F. Casorati'', University of Pavia.

I am deeply indebted to Stefano Gualandi for discussions and insightful observation on the topic.

\bibliographystyle{apalike}
\bibliography{biblio}

\end{document}